\newtheorem{theorem}{Theorem}[section]
\newtheorem{lemma}[theorem]{Lemma}
\theoremstyle{definition}
\newtheorem{definition}{Definition}[section]
\theoremstyle{remark}
\newtheorem{remark}{Remark}
\begin{document}

\title{A remark on the energy conditions for Hawking's area theorem}
\author{Martin Lesourd\footnote{martin.lesourd@linacre.ox.ac.uk}, \\ Linacre College, \\ Oxford University }
\maketitle

\abstract{Hawking's area theorem is a fundamental result in black hole theory that is universally associated with the null energy condition. That this condition can be weakened is illustrated by the formulation of a strengthened version of the theorem based on an energy condition that allows for violations of the null energy condition. With the semi-classical context in mind, some brief remarks pertaining to the suitability of the area theorem and its energy condition are made.  }

\section{Introduction} 
Many classic theorems of relativity are obtained by positing a number of local conditions on the geometry of spacetime. These geometric conditions are inequalities imposed, by fiat, on certain contractions of the Einstein or Ricci tensor. With the use of Einstein's equations, these geometric conditions become energy conditions that, supposedly, represent certain energetic characteristics of matter residing in spacetime. It is now understood, however, that these local energy conditions are violated by a number of classical matter models and, moreover, that violations are ubiquitous in the context of quantum field theory in both flat and curved spacetime.\footnote{See \cite{C 14} for a foundational perspective on the status of energy conditions.} To put it another way, the classic theorems aforementioned rely on assumptions not satisfied in contexts considered physically relevant. In view of this, we might wish to ask, for certain theorems of interest, whether they can be formulated with weaker energy conditions. The purpose of this paper is two-fold. To show that Hawking's area theorem can be strengthened as such, and, with the semi-classical context in mind, to interpret the result presented. The definitions used here follow Wald \cite{W 84}.   \\ \\ 
It is instructive, before delving into the area theorem, to consider the well known singularity theorems.\footnote{See \cite{W 84} for an introduction.} There is, it seems, a common template to these theorems. Their assumptions usually include an energy condition, a restriction on the causal properties of the spacetime, and an initial or boundary condition, and their conclusions almost always involve no more than the failure of non-spacelike geodesic completeness. Raychaudhuri's equation describing geodesic congruences is very useful in many proofs of such theorems. In the four dimensional null irrotational case it reads \begin{equation} \frac{d\theta}{d\lambda} = -\frac{1}{2} \theta^2 - \sigma^2 - R_{ab}k^a k^b \end{equation} with \(\lambda\) the affine parameter, \(\sigma\) the shear, \(\theta\) the expansion and \(R_{ab}k^a k^b\) the Ricci tensor twice contracted with a null vector \(k^a\) tangent to the geodesic. Assuming the null convergence condition, \(R_{ab}k^a k^b \geq 0\), or, with Einstein's equations in four dimensions, the null energy condition (NEC), \(T_{ab}k^a k^b\geq 0\), it follows that if the expansion \(\theta\) satisfies \(\theta(\lambda_0) < 0\), then \(\theta \to -\infty \) within finite affine parameter \(\lambda \in (\lambda_0,\infty)\). This behavior is sometimes referred to as geodesic focusing. The onset of geodesic focusing signals the failure of certain geodesics to satisfy certain properties which, in other circumstances, are associated with them. In the null case, a null geodesic focusing to the future of a point signals the geodesic's failure to remain on the boundary of the causal future of that point, and in the timelike case, focusing signals the geodesic's failure to maximize proper time. Many proofs of singularity theorems work by setting up a contradiction under the assumption that all null or timelike geodesics are complete. The rough template is runs as follows. Assume that all null (timelike) geodesics are complete and deduce, under the energy and boundary or initial conditions, the onset of geodesic focusing. Then, combining causal restriction and initial or boundary condition, show that the focusing produced leads to a contradiction. Deduce, therefore, that not all null (timelike) geodesics can be complete.  \\ \\ 
Tipler \cite{T 78} was among the first to show that singularity theorems may be strengthened by way of weaker energy conditions. He defined these weaker energy conditions as non-local restrictions on the integral, along certain types of null or timelike geodesics, of various contractions of the Ricci or Einstein tensor. He showed that these conditions were sufficient to cause focusing and indeed this made it possible to strengthen certain singularity theorems without major amendments to the original arguments. His observation was developed over many years and there arose a number of weaker energy conditions falling under the umbrella term of average energy conditions. One example that continues to generate interest is the average null energy condition (ANEC), which, roughly speaking, is the requirement that \[\int_\gamma R_{ab} k^a k^b d\lambda \geq 0 \] for some suitable class of null geodesics \(\{ \gamma \}\) with \(k^a \) a null vector tangent to the geodesic.\footnote{More care in definitions is taken when we come to formulate the result. }  \\ \\ Though mathematically weaker, the physical interpretation of these average energy conditions is still murky at best. Over the course of a long list of studies, it has been found that many of the average energy conditions allowing for theorem-strengthening are violated by classical matter models, and, less straightforwardly, in the context of quantum field theory in curved spacetime (QFCTS).\footnote{See \cite{C 14, V, F 05} for an expression of this fact and links to some of the relevant references.} To put it another way, many classic theorems of relativity do not, at present, cover a whole host of classical and quantum matter models of physical interest. Efforts to better this situation are, naturally, still ongoing.   \\ \\
In the QFTCS context, there is growing evidence that the extent of the violation of certain energy conditions is, in some sense, restricted. There is a whole body of work dedicated to making this more precise. The idea is to produce certain kinds of inequalities that represent the spatiotemporal constraints that (contractions of) renormalized stress energy tensors in various contexts of QFTCS obey.\footnote{See \cite{F 05} for an introduction and links to various relevant references.} These inequalities are known as quantum energy inequalities (QEI), and, in the best of cases, they have been used either to prove the ANEC in certain circumstances or to constrain the properties of certain spacetime scenarios otherwise associated with violations of certain more standard energy conditions. Examples include: Ford and Roman's study of the properties of traversable wormholes \cite{FR 96}, Fewster, Olum, Pfenning QEI-proof of the ANEC under certain circumstances \cite{Few}, and Kontou and Olum QEI-proof under different circumstances \cite{Kontou}.  \\ \\ 
Yet despite providing valuable insights, QEI have not yet permitted the extension of certain classic relativity theorems to the semi-classical context. Galloway and Fewster \cite{FG 11} recently proposed a study aiming to provide a step in this direction. They formulated versions of Hawking's cosmological and Penrose's collapse singularity theorems based on energy conditions allowing for, respectively, violations of the strong and null energy conditions. Though inspired by QEI methods, the conditions that stand in as energy conditions in their singularity theorems are not, strictly speaking, QEI. On this point, it is worth noting that the relationship between QEI and energy conditions remains not particularly well understood, with only a few definitive results available. One example is \cite{FR 03} where Fewster and Roman show that the existence of QEI is not necessary for the ANEC to be satisfied. \\ \\ Galloway and Fewster's arguments rely on lemmas that establish sufficient conditions for focusing, which they use to strengthen the singularity theorems of, respectively, Penrose and Hawking. Moreover, they do so without having to make any essential amendments to the original spirit of the proof of these theorems, and they also consider a specific matter model to illustrate how the NEC may be violated whilst satisfying their weakened energy conditions. The specific lemma that we shall use is the following. \begin{lemma} Consider the initial value problem for \(z(t)\) \[ \dot{z} = \frac{z^2}{s} + r \] where \(r(t)\) is continuous on \([0,\infty)\), \(z(0)=z_0\) and \(s>0\) is constant. If there exists \(c \geq 0\) such that \[z_0 -\frac{c}{2} + \lim_{T\to \infty} \inf \int_0^T e^{-2ct/s}r(t) dt >0\] then the initial value problem has no solution on \([0,\infty)\), where `no solution' means \(z(t) \to \infty\) as \(t\to t^-_* < \infty\). \end{lemma}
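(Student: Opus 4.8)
The plan is to argue by contradiction: assuming a solution exists on all of $[0,\infty)$, I will show that $z$ must in fact diverge within a finite time, contradicting global existence. Since the right-hand side $z^2/s + r$ is smooth in $z$ and continuous in $t$, a maximal solution can fail to be global only through $z \to +\infty$ (escape to $-\infty$ is impossible, for there the term $z^2/s$ would force $\dot z$ large and positive), so ruling out a global solution is exactly the stated conclusion. The engine of the argument is the elementary bound $(z-c)^2 \ge 0$, that is $z^2 \ge 2cz - c^2$, which replaces the Riccati nonlinearity by a linear lower bound and gives the differential inequality \[ \dot z \;\ge\; \frac{2c}{s}\,z - \frac{c^2}{s} + r . \] The coefficient $2c/s$ is precisely what singles out the integrating factor $e^{-2ct/s}$ appearing in the hypothesis.

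First I would multiply by $e^{-2ct/s}$ and integrate over $[0,T]$, using $\frac{c^2}{s}\int_0^T e^{-2ct/s}\,\dd t = \frac{c}{2}\left(1 - e^{-2cT/s}\right)$, which reproduces the term $-c/2$ in the statement. This yields \[ e^{-2cT/s}\left(z(T) - \frac{c}{2}\right) \;\ge\; z_0 - \frac{c}{2} + \int_0^T e^{-2ct/s} r(t)\,\dd t . \] Taking the limit inferior of the right-hand side and invoking the hypothesis shows it is eventually bounded below by some $\delta > 0$; hence $z(T) - \frac{c}{2} \ge \frac{\delta}{2}\,e^{2cT/s}$ for all $T$ past some $T_0$, so $z$ is eventually positive and, for $c>0$, grows at least exponentially. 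I stress that this is not yet a contradiction — a globally defined solution may diverge as $T\to\infty$ — so an extra step is needed to obtain genuine finite-time blow-up.

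For that step I would keep the quadratic term rather than discarding it. Integrating the exact identity $\frac{d}{dt}\left(e^{-2ct/s}z\right) = \frac{1}{s} e^{-2ct/s}(z-c)^2 - \frac{c^2}{s}e^{-2ct/s} + e^{-2ct/s}r$ gives \[ e^{-2cT/s}\left(z(T) - \frac{c}{2}\right) \;=\; h(T) + \frac{1}{s}\int_0^T e^{-2ct/s}\left(z(t)-c\right)^2\,\dd t , \] where $h(T) := z_0 - \frac{c}{2} + \int_0^T e^{-2ct/s} r\,\dd t \ge \frac{\delta}{2}$ for $T \ge T_0$. Writing $G(T)$ for the nonnegative, nondecreasing integral on the right, the eventual exponential lower bound on $z$ feeds back through $\dot G = \frac{1}{s} e^{-2ct/s}(z-c)^2$ to produce a scalar Riccati inequality of the shape $\dot G \ge \gamma\left(\frac{\delta}{2} + G\right)^2$ with a constant $\gamma > 0$, and such inequalities force $G \to +\infty$ within a bounded time. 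As $G$ must stay finite on any interval on which $z$ exists, this is the desired contradiction and localises the blow-up at a finite $t_*$.

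The step I expect to be the main obstacle is exactly this passage from mere divergence to finite-time blow-up. The difficulty is that the hypothesis controls only the weighted running integral of $r$ and says nothing pointwise about $r$ in the far future, so one cannot simply bound $r$ over a short future interval and conclude blow-up there. The resolution is structural: folding the entire contribution of $r$ into $h(T)$, which the hypothesis guarantees is eventually positive, isolates a self-contained Riccati inequality for the accumulated quadratic term $G$ in which $r$ no longer appears. What then remains is bookkeeping — verifying that the decaying weight $e^{-2ct/s}$ is more than offset by the $e^{4ct/s}$ growth of $(z-c)^2$, so that $\dot G$ genuinely dominates a positive multiple of $\left(\frac{\delta}{2}+G\right)^2$ — together with the degenerate case $c=0$, where the weight is trivial and the identity collapses to $z(T) = h(T) + \frac{1}{s}\int_0^T z^2\,\dd t$.
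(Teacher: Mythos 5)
Your proof is correct. A point of order first: the paper never proves this lemma at all --- it is imported verbatim, as a ready-made tool, from Fewster and Galloway \cite{FG 11}, so the only proof available for comparison is the one in that reference, and yours follows essentially the same route: the weight \(e^{-2ct/s}\) and the shift \(c/2\) in the hypothesis arise precisely from your exact identity \(\frac{d}{dt}\left[e^{-2ct/s}\left(z-\frac{c}{2}\right)\right]=e^{-2ct/s}\left[\frac{(z-c)^2}{s}+r\right]\), used once with the nonnegative quadratic term discarded (giving \(e^{-2cT/s}\left(z(T)-\frac{c}{2}\right)\ge h(T)\ge\delta\) for all \(T\ge T_0\)) and once with it retained (giving finite-time blow-up). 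The two difficulties you flag are the genuine ones, and your resolutions hold up under scrutiny: writing \(z-c=e^{2cT/s}(h+G)-\frac{c}{2}\), once \(T\) is also large enough that \(e^{2cT/s}\frac{\delta}{2}\ge\frac{c}{2}\) one gets \(z-c\ge e^{2cT/s}\left(\frac{\delta}{2}+G\right)\ge 0\) and hence \(\dot G\ge\frac{1}{s}e^{2cT/s}\left(\frac{\delta}{2}+G\right)^2\ge\frac{1}{s}\left(\frac{\delta}{2}+G\right)^2\), which forces \(G\) --- necessarily finite on any interval where \(z\) exists --- to diverge within time \(2s/\delta\) thereafter; the case \(c=0\) is identical minus the absorption step. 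Combined with your (correct) preliminary observation that a maximal solution of this Riccati equation can only fail to be global through \(z\to+\infty\) at some finite \(t_*\), since \(\dot z\ge-\sup|r|\) on compact intervals rules out escape to \(-\infty\), this yields the lemma in exactly the form stated.
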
  

\section{A stronger area theorem} 
Hawking's area theorem \cite{H} is often considered to be one of the most important results in black hole theory. It describes a fundamental property of dynamical black holes, it underlies black hole thermodynamics and it bounds the amount of radiation that can be emitted upon black hole collisions. Consider Ashketar and Krishnan's remarks in their recent very well cited review on generalized black holes \cite{AK}. \begin{quotation} For fully dynamical black holes, apart from the `topological censorship' results which restrict the horizon topology [...], there has essentially been only one major result in exact general relativity. This is the celebrated area theorem proved by Hawking in the early seventies [...]: If matter satisfies the null energy condition, the area of the black hole event horizon can never decrease. This theorem has been extremely influential because of its similarity with the second law of thermodynamics. \end{quotation} The precise statement of the area theorem depends on a number of definitions for which there are a number of possible choices. The definitions and the proof used in theorem below are identical to those used by Wald \cite{W 84}. This choice is made for reasons of expediency. Note, however, that the following arguments apply to other available formulations of the area theorem. Consider now the main result of this article. \begin{definition} A spacetime \((M,g)\) satisfies the \textit{damped averaged null energy condition} (dANEC) if along each future complete affinely parametrized null geodesic \(\gamma:[0,\infty)\to M\), there exists a non-negative constant \(c \geq 0\) such that \[ \lim_{T\to \infty} \inf \int_0^T e^{-ct} Ric(\gamma', \gamma') dt - \frac{c}{2} >0 \] where \(\gamma'\) is a tangent vector for \(\gamma\). 
\end{definition}
\begin{theorem} 
Let \((M,g_{ab})\) be a four-dimensional strongly asymptotically predictable spacetime that satisfies the dANEC. Let \(\Sigma_1\) and \(\Sigma_2\) be spacelike Cauchy surfaces for the globally hyperbolic region \(\tilde{V}\) with \(\Sigma_2 \subset I^+(\Sigma_1)\) and let \(\mathcal{H}_{1(2)}=H \cap \Sigma_{1(2)}\), where \(H\) denotes the event horizon, i.e., the boundary of the black hole region of \((M,g_{ab})\). Then the area of \(\mathcal{H}_2\) is greater or equal to the area of \(\mathcal{H}_1 \) \end{theorem}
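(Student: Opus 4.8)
The plan is to run Wald's proof of the area theorem unchanged, replacing only the single place where the null energy condition enters — the derivation of non-negative horizon expansion — with an argument driven by the dANEC through the Lemma. Recall that Wald's argument reduces area monotonicity to the single geometric fact that the expansion $\theta$ of the null geodesic generators of the event horizon $H$ is everywhere non-negative. Granting this, the comparison of areas is immediate and I would reproduce it verbatim: every generator meeting $\mathcal{H}_1 = H \cap \Sigma_1$ has no future endpoint, so as a future-inextendible causal curve it must cross the Cauchy surface $\Sigma_2$, giving a map $\psi : \mathcal{H}_1 \to \mathcal{H}_2$; since the cross-sectional area element $\mathcal{A}$ satisfies $\dot{\mathcal{A}} = \theta\,\mathcal{A}$, the bound $\theta \ge 0$ yields $\mathrm{Area}(\psi(\mathcal{H}_1)) \ge \mathrm{Area}(\mathcal{H}_1)$, and as $\psi(\mathcal{H}_1) \subseteq \mathcal{H}_2$ (generators may enter $H$ toward the future but none may leave) we obtain $\mathrm{Area}(\mathcal{H}_2) \ge \mathrm{Area}(\mathcal{H}_1)$.

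The real content is thus to re-establish $\theta \ge 0$ from the dANEC, and I would argue by contradiction. Suppose $\theta(p) < 0$ at some $p \in H$ and let $\gamma$ be the affinely parametrized generator through $p$ with $\gamma(0) = p$. Putting $z = -\theta$ and discarding the non-negative shear term, Raychaudhuri's equation (1) becomes the differential inequality $\dot z \ge \frac{1}{2} z^2 + Ric(\gamma',\gamma')$, which is exactly the super-solution form of the Lemma's equation with $s = 2$ and $r(t) = Ric(\gamma',\gamma')$. With this value $s = 2$ the Lemma's weight $e^{-2ct/s}$ collapses to $e^{-ct}$, matching the exponential in the dANEC precisely. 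Since $z(0) = -\theta(p) > 0$ and the dANEC furnishes a $c \ge 0$ with $\liminf_{T\to\infty}\int_0^T e^{-ct} Ric(\gamma',\gamma')\,dt - \frac{c}{2} > 0$, the Lemma's hypothesis $z(0) - \frac{c}{2} + \liminf_{T\to\infty}\int_0^T e^{-ct} r\,dt > 0$ holds a fortiori. Applying the Lemma to the companion problem $\dot y = \frac{1}{2} y^2 + r$ with $y(0) = -\theta(p)$ gives blow-up of $y$ at finite $t_*$, and an ODE comparison then forces $z \ge y$, so that $\theta \to -\infty$ within finite affine parameter.

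From here the contradiction is identical to Wald's. A horizon generator that develops a caustic at finite affine parameter cannot remain on the event horizon, because a null geodesic continued beyond a focal point enters the chronological future, allowing $p$ to be joined to future null infinity $\mathcal{I}^+$ by a timelike curve; this would place $p$ in the open set $I^-(\mathcal{I}^+)$ rather than on its boundary $H$, a contradiction. Hence $\theta \ge 0$ throughout $H$, which closes the argument.

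The step I expect to be the genuine obstacle is reconciling the asymptotic nature of the dANEC with what the focusing argument actually needs. Unlike the pointwise null convergence condition, which requires control of $Ric(\gamma',\gamma')$ only over the finite affine interval up to the caustic, the dANEC is an integral over all of $[0,\infty)$ and can be invoked only once the generator $\gamma$ is known to be future-complete, so that $r(t)$ is defined and continuous there; the area comparison, by contrast, needs each generator to reach only as far as $\Sigma_2$. The crux is therefore to verify that the null generators of $H$ threading the region between $\Sigma_1$ and $\Sigma_2$ are future-complete, and it is here that strong asymptotic predictability together with the no-future-endpoint property of horizon generators must be brought to bear. I would treat this completeness point with particular care, since it is the one place where the scheme genuinely departs from, and is more delicate than, the classical proof.
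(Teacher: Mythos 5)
Your overall skeleton — reduce the theorem to non-negativity of the horizon expansion, then run the generator map from \(\mathcal{H}_1\) into \(\mathcal{H}_2\) — matches the paper's proof (which follows Wald verbatim), and your handling of the Lemma is sound: \(s=2\) so that \(e^{-2ct/s}=e^{-ct}\) matches the dANEC weight, discard the shear via an ODE comparison, and invoke blow-up. But the way you establish \(\theta\geq 0\) is genuinely different from the paper's, and it is exactly this difference that opens a gap — one you flag yourself as "the crux" but do not close, and which in fact cannot be closed under the stated hypotheses. You apply the focusing argument directly to the horizon generator \(\gamma\) through the point \(p\) where \(\theta<0\). For that you need \(\gamma\) to be future complete, for two reasons: the dANEC is, by definition, assumed only along future complete null geodesics, so an incomplete generator is simply not constrained by it; and the Lemma requires \(r(t)\) to be defined and continuous on all of \([0,\infty)\). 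Future completeness of horizon generators does not follow from strong asymptotic predictability together with the no-future-endpoint property: having no future endpoint means the generator is future inextendible within \(H\), not that its affine parameter runs to infinity. This is precisely the known lacuna in "direct" proofs of the area theorem; in Chru\'sciel--Delay--Galloway--Howard (cited in the paper's remarks) completeness of the generators must either be imposed as a separate hypothesis or circumvented by working with the conformal completion. So your plan stalls at the very step you identify, and no amount of care at that step will produce the missing completeness.

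The paper avoids the issue entirely by following Wald's deformation argument rather than focusing the horizon generator itself. If \(\theta<0\) at \(p\in H\), one deforms \(\mathcal{H}=H\cap\Sigma\) outward near \(p\) to a surface \(\mathcal{H}'\) that enters \(J^-(\mathcal{J}^+)\) with \(\theta<0\) there; letting \(K\subset\Sigma\) be the region between \(\mathcal{H}\) and \(\mathcal{H}'\), one picks \(q\in\mathcal{J}^+\cap\dot{J}^+(K)\), and the null generator of \(\dot{J}^+(K)\) through \(q\) must meet \(\mathcal{H}'\) orthogonally and remain focal-point-free all the way out to \(q\). The decisive point is that a null geodesic reaching \(\mathcal{J}^+\) has infinite affine length in the physical metric, hence \emph{is} future complete, so the dANEC applies to it and the Lemma produces a focal point at finite affine parameter — the contradiction. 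In other words, routing the focusing through generators that escape to null infinity is not a stylistic choice in Wald's proof; it is what makes the completeness clause built into the dANEC available exactly where it is needed. To salvage your version you would have to add future completeness of the horizon generators as an extra hypothesis (as in one branch of the Chru\'sciel--Delay--Galloway--Howard theorem), yielding a different, and in this setting weaker, statement than the one claimed.
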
 

\begin{proof} The argument is a straightforward application of Galloway and Fewster's lemma 1.1 with the original arguments by Hawking, which are as in \cite{W 84} in this formulation. We start by showing that the expansion \(\theta\) of the null generators of \(H\) is everywhere non-negative. \\ \indent Suppose \(\theta <0\) at \(p\in H\). Let \(\Sigma\) be a spacelike Cauchy surface for \(\tilde{V}\) passing through \(p\) and consider the two-surface \(\mathcal{H}=H\cap \Sigma\). Since \(\theta<0\) at \(p\), we can deform \(\mathcal{H}\) outward in a neighborhood of \(p\) to obtain a surface \(\mathcal{H}'\) on \(\Sigma\) which enters \(J^-(\mathcal{J}^+)\) and has \(\theta<0\) everywhere in \(J^-(\mathcal{J}^+)\). However, by the same argument as in proposition 12.2.2 of \cite{W 84}, this leads to a contradiction as follows. Let \(K\subset \Sigma\) be the closed region lying between \(\mathcal{H}\) and \(\mathcal{H}'\), and let \(q\in \mathcal{J}^+\) and with \(q\in \dot{J}^+(K)\). Then the null geodesic generator of \(\dot{J}^+(K)\) on which \(q\) lies must meet \(\mathcal{H}'\) orthogonally. \\ \indent However, this is impossible since, by \(\theta<0\) on \(\mathcal{H}'\), the null generator of \(\dot{J}^+(K)\), eg., \(\gamma\), will have a conjugate point before reaching \(q\). This follows from considering Raychaudhuri's equation for irrotational null congruences, equation (1), and identifying it with the equation of lemma 1.1. That is, take \(s=2\), \(z=-\theta(t)\), \(r(t)=Ric(\gamma', \gamma')+ 2\sigma^2\) where \(\theta(t)\) is the null expansion relevant to the null geodesic \(\gamma\). We now have an initial value problem as in lemma 1.1 with \(z(0)=-\theta(0)>0\). By the dANEC and the non-negativity of \(\sigma^2\), it follows that the conditions of lemma 1.1 are satisfied, and so there occurs a focal point within finite affine parameter from \(\gamma(0)\). \\ \indent It now follows that \(\theta\geq0\). By standard properties \cite{W 84}, each \(p\in \mathcal{H}_1\) lies on a future inextendible null geodesic, \(\eta\), contained in \(H\). Since \(\Sigma_2\) is a Cauchy surface, \(\eta\) must intersect \(\Sigma_2\) at a point \(q\in \mathcal{H}_2\). Thus, we obtain a natural map from \(\mathcal{H}_1\) into a portion of \(\mathcal{H}_2\). Since \(\theta\geq0\), the area of the portion of \(\mathcal{H}_2\) given by the image of \(\mathcal{H}_1\) under this map must be at least as large as the area of \(\mathcal{H}_1\). In addition, since the map need not be onto, the area of \(\mathcal{H}_2\) may be even larger. Thus, the area of \(\mathcal{H}_2\) cannot be smaller than than of \(\mathcal{H}_1\). \end{proof} 

\begin{remark} We have chosen an energy condition based on the work of Galloway and Fewster because we find it to be both state of the art and neatly amenable to classical field models violating the NEC. In section 6 of their paper \cite{FG 11}, they construct such a model which, by inessential modifications, is straightforwardly applicable to theorem 2.1. We note that other conditions similar in form to the ANEC for semi-complete geodesics could have been used, eg., Roman's condition \cite{R 88} in his version of Penrose's theorem.\end{remark} 

\begin{remark} Recent studies \cite{CDHG} have shown that there are, in both Hawking \cite{H} and Wald's \cite{W 84} formulation of the area theorem, lacunas in the form of unstated or undesirable assumptions of differentiability of the horizon. These deficiencies have been overcome in formulations of much more sophisticated area theorems \cite{CDHG}. Indeed, the authors in \cite{CDHG} mention that the assumptions of the area theorem in \cite{W 84} are satisfied under their conditions. It is worth noting that the strengthening offered here generalizes to these superior area theorems.
\end{remark}
\begin{remark} The exponential damping in the integrand shows that boundary effects are crucial in determining whether area non-decrease obtains. Provided there occurs positive contributions to the null contractions of the Ricci tensor near the horizon, the condition may be satisfied even if negative contributions persist for arbitrarily long segments of the null geodesic away from the horizon. That is: the area may be non-decreasing even if there are large violations to the standard ANEC for semi-complete geodesics.  \end{remark} 

\begin{remark}
A number of other standard theorems about black holes can be strengthened in the way described. The list includes results describing the location of trapped surfaces, marginally trapped surfaces, apparent horizons and so on. Our focus on the area theorem is owed to the obvious tension it generates with the idea of Hawking radiation. 
\end{remark}
\begin{remark}
As a final point, this conclusion seems to dovetail nicely with what emerges from the generalized black holes framework.\footnote{See \cite{J} for an introduction.} In that framework, area non-decrease results are obtained for trapping, isolated and dynamical horizons upon assuming that the NEC holds within a neighborhood of the relevant horizon. We note, then, that the dANEC will be satisfied if there occurs, in the immediate vicinity of the horizon, sufficiently positive contributions to the integrand. With that being said, the relation between global and quasi-local types of horizons is still yet to be fully understood, and so these area non-decrease behaviors ought not to be identified. The parallel is nevertheless worth underlining.
\end{remark}

\section{Discussion}
There is now a certain conventional wisdom regarding the tension between Hawking radiation and the area theorem. Consider, for instance, the remarks made by Visser \cite{V} and, respectively, Bousso and Engelhart \cite{BE 15}. \begin{quotation} The very fact that Hawking evaporation occurs at all violates the area increase theorem...for classical black holes. This implies that the quantum process underlying the Hawking evaporation process must also induce a violation of one or more of the input assumptions used in proving the classical area increase theorem. The only input assumption that seems vulnerable to quantum violation is the assumed applicability of the null energy condition. \end{quotation} \begin{quotation} Hawking's theorem holds in spacetimes obeying the null curvature condition, \(R_{ab}k^a k^b \geq0\) for any null vector \(k^a\). This will be the case if the Einstein equations are obeyed with a stress tensor satisfying the NEC, \(T_{ab}k^a k^b \geq0\). The NEC is satisfied by ordinary classical matter, but it is violated by valid quantum states (e.g., in the Standard Model). In particular, the NEC fails in a neighborhood of a black hole horizon when Hawking radiation is emitted. Indeed, the area of the event horizon of an evaporating black hole decreases, violating the Hawking area law. \end{quotation} 
Galloway and Fewster show that their energy condition leads, under the relevant conditions, to null geodesic incompleteness. Their argument is classical and standard except for the particular form of the energy condition that is used. One may consider whether their results provide evidence in favor of singularities occurring in the semi-classical context, and, likewise, whether the area theorem is relevant to the study of semi-classical black holes. We recall, on this point, that a precise and rigorous understanding of the sense in which the area theorem fails in the semi-classical context is still unavailable. Nevertheless, in view of the almost unchallenged view that black holes do in fact radiate in various semi-classical contexts, there are at least two kinds of possible attitudes: either none of the standard arguments are fully applicable and we must await further progress, or, the standard arguments apply in the sense that it is only the energy condition component of the various theorems that need generalization to the QFTCS context. In the former case, the theorem above provides no particularly new information. As for the latter, there seem to be two possibilities. There may arise conflicting results whereby a model satisfies an energy condition permitting an area theorem (similar, perhaps, to the one above) despite being one for which evaporation is expected. In that case, then, the typical argument in favor of evaporation cannot be sustained. The other possibility is that everything remains harmonious in the sense that models satisfying energy conditions that lead to an area theorem end up being precisely those in which area decrease by evaporation is not expected. Results in either direction would be fruitful.  \\ \\ The view adopted by the author is that neither case seems to be substantiated by precise and rigorous arguments, and, moreover, that current understanding is still some way away from identifying exactly what in the area theorem fails semi-classically. Further understanding of this issue is likely to be provided by a better understanding of a number issues including, for instance, the type of energy conditions that are suitable in the context of QFTCS, the effects of back-reaction, the trans-Planckian problem, and the various links between event horizons and analogous entities occurring in the generalized black holes framework.

\section*{Acknowledgements}
I thank the Ruth and Nevil Mott scholarship and the AHRC for funding this research. I thank Prof. Erik Curiel for his guidance and for writing the paper that provided the initial inspiration for this work \cite{C 14}. I also thank my supervisor, Prof. Harvey Brown, for his support, and last but not least, I thank the reviewers for their insightful suggestions. 
\newpage

\end{document}